\newclass{\RBP}{RedBluePath}
\newclass{\EP}{EvenPath}
\newclass{\CBPL}{CBPL}
\newclass{\CSC}{CSC_1}
\newclass{\CSPACE}{CSPACE}
\newclass{\CSSPACE}{CSSPACE}
\newclass{\CTISP}{CTISP}
\newclass{\CBPSPACE}{CBPSPACE}
\newclass{\CNL}{CNL}
\newclass{\CL}{CL}
\newclass{\LCL}{LCL}
\newclass{\hamdist}{hamdist}
\newclass{\coCNL}{coCNL}
\newclass{\CUL}{CUL}
\newclass{\coCUL}{coCUL}
\newclass{\CNSPACE}{CNSPACE}
\newclass{\CUSPACE}{CUSPACE}
\newclass{\ZTIME}{ZTIME}
\newclass{\LCSPACE}{LCSPACE}
\newclass{\uniform}{uniform}
\theoremstyle{plain}
\newtheorem{thm}{Theorem}
\newtheorem{nlemma}[thm]{Lemma}
\theoremstyle{definition}
\newtheorem{defn}[thm]{Definition}
\date{ }
\journal{arXiv}
\begin{document}

\begin{frontmatter}

\title{Lossy Catalytic Computation}

%% or include affiliations in footnotes:
\author[address1]{Chetan Gupta}
\ead{chetan.gupta@cs.iitr.ac.in}

\author[address2]{Rahul Jain\corref{mycorrespondingauthor}}
\cortext[mycorrespondingauthor]{Corresponding author}
\ead{rahul.jain@uni-siegen.de}

\author[address3]{Vimal Raj Sharma}
\ead{vimalraj@iitj.ac.in}

\author[address4]{Raghunath Tewari}
\ead{rtewari@cse.iitk.ac.in}

\address[address1]{Indian Institute of Technology Roorkee - Haridwar Highway, Roorkee, Uttarakhand, 247667, India}
\address[address2]{Universität Siegen, Germany}
\address[address3]{Indian Institute of Technology Jodhpur, N.H.- 62, Nagaur Road, Karwar, Jodhpur, 342030, India}
\address[address4]{Indian Institute of Technology Kanpur, Kalyanpur, Kanpur, 208016, India}

\begin{abstract}
A catalytic Turing machine is a variant of a Turing machine in which there exists an auxiliary tape in addition to the input tape and the work tape. This auxiliary tape is initially filled with arbitrary content. The machine can read and write on the auxiliary tape, but it is constrained to restore its initial content when it halts. Studying such a model and finding its powers and limitations has practical applications.

In this paper, we study catalytic Turing machines with $O(\log n)$-sized work tape and polynomial-sized auxiliary tape that are allowed to lose at most constant many bits of the auxiliary tape when they halt. We show that such catalytic Turing machines can only decide the same set of languages as standard catalytic Turing machines with the same size work and auxiliary tape.
\end{abstract}

\begin{keyword}
Catalytic Computation, Logspace
\MSC[2010] 00-01\sep  99-00
\end{keyword}

\end{frontmatter}

\section{Introduction}
A $\emph{catalytic Turing machine}$ is a recently introduced model of computation by Buhrman et al. \cite{Buhrman2014} that has an auxiliary tape filled with arbitrary content in addition to the work tape of a standard Turing machine. The machine, during the computation, can read and write to the auxiliary tape, but when it halts, it is constrained to have the same content in the auxiliary tape as it had initially. Whether the catalytic Turing machine model is more powerful than the traditional Turing machine model is the holy grail of catalytic computation. Intuitively, the extra space might look useless; however, Buhrman et al. \cite{Buhrman2014} showed that catalytic Turing machines with $O(\log n)$ work space and $n^{O(1)}$ auxiliary space (Catalytic logspace, $\CL$) could solve problems which are not known to be solvable by standard Turing machines using $O(\log n)$ work space (Logspace, $\textsf{L}$). Specifically, they showed that $\CL$ contains the circuit class $\textsf{uniform}$-$\TC_1$, which is suspected to be different from $\textsf{L}$. They also proved that $\CL$ is contained in $\ZPP$, the set of languages decidable in expected polynomial time. Later, Buhrman et al. \cite{Buhrman2018} also defined the nondeterministic catalytic computational model. The complexity class $\CNL$ is the nondeterministic variant of $\CL$. They showed that, under a standard derandomization assumption, $\CNL$ is closed under complement. Many other researchers have studied the catalytic model as well \cite{Girard, PotechinP17, CM20, GJST19, DGJST20, BDS20, CookM21, CookM22, CookM24, Pyne24}. For an overview of the catalytic model, one can refer to the surveys by Kouck{\'{y}} \cite{Koucky16} and Mertz \cite{Mertz23}.

In this paper, we study whether allowing catalytic Turing machines to lose some auxiliary bits increases their power. A \emph{$k$-lossy catalytic Turing machine} is a catalytic Turing machine that is allowed to lose at most $k$ many arbitrary bits when it halts, where $k$ is a constant.  We prove that logspace $k$-lossy catalytic Turing machines are not more powerful than the standard logspace catalytic Turing machines. More specifically, in the logspace setting, we show that for every $k$-lossy catalytic Turing machine $\mathcal{M}$ there exists a normal catalytic Turing machine $\mathcal{M'}$ which can decide the language $L(\mathcal{M})$. We crucially use \emph{FKS hashing scheme} to achieve this result.

\section{Preliminaries}
\label{lossy:pre}
The deterministic catalytic Turing machine was formally defined by Buhrman et al. \cite{Buhrman2018} in the following way.

\begin{defn}
Let $\mathcal{M}$ be a deterministic Turing machine with three tapes: one input tape, one work tape, and one \textit{auxiliary tape}. $\mathcal{M}$ is said to be a \textit{deterministic catalytic Turing machine} using work space $s(n)$ and auxiliary space
$s_a(n)$ if for all inputs $x \in \{0, 1\}^n$ and auxiliary tape contents $w \in \{0, 1\}^{s_a(n)}$, the following three properties hold.
\begin{enumerate}
\item \textbf{Space bound.} The machine $\mathcal{M}$ uses space $s(n)$ on its work tape and space $s_a(n)$ on its auxiliary tape.
\item \textbf{Catalytic condition.} $\mathcal{M}$ halts with $w$ on its auxiliary tape.
\item \textbf{Consistency.} $\mathcal{M}$ either accepts $x$ for all choices of $w$ or it rejects for all choices of $w$.
\end{enumerate}
\end{defn}

For any catalytic Turing machine $\mathcal{M}$, input $x$, and auxiliary content $w$, $\mathcal{M}(x,w)$ denotes the computation of $\mathcal{M}$ on $x$ and $w$.

\begin{defn} $\CSPACE(s(n))$ is the set of languages that can be solved by a deterministic catalytic Turing machine that uses at most $s(n)$ size work space and $2^{s(n)}$ size auxiliary space on all inputs $x \in \{0, 1\}^n$. $\CL$ denotes the class $\CSPACE(O(\log n))$.
\end{defn}

For two binary strings of equal length, their Hamming distance is the number of positions at which the corresponding bits are different. We denote the Hamming distance of two binary strings of equal length, say $x$ and $y$, by $\hamdist(x, y)$.

We now formally define the $k$-lossy catalytic Turing machine.

\begin{defn}
For any constant $k$, a \emph{$k$-lossy catalytic Turing machine} is a catalytic Turing machine that, when starting with $w$ in the auxiliary tape, halts with $w'$ in the auxiliary tape such that $\hamdist(w, w') \leq k$.
\end{defn}

\begin{defn}
$\LCSPACE(s(n), k)$ is the set of languages that can be solved by a $k$-lossy catalytic Turing machine using $s(n)$ size work space and $2^{s(n)}$ size auxiliary space, where $k$ is a constant. $k$-$\LCL$ denotes the set $\LCSPACE(O(\log n),k)$.
\end{defn}

We also use the hashing scheme of Fredman, Koml\'{o}s and Szemer\'{e}di \cite{FKS84} in a form useful for us.

\begin{nlemma}
\label{fks}
\cite{FKS84}
Let $S = \{x_1, x_2, \dots, x_k\}$ be a set of $n$-bit integers. Then there exists an ${O}(\log n + \log k)$-bit prime number $p$ so that for all $x_i \neq x_j \in S$, $x_i \bmod{p} \neq x_j \bmod{p}$.
\end{nlemma}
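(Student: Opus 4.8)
The plan is to prove this by a counting argument over the primes, following the original FKS construction. Fix the set $S = \{x_1,\dots,x_k\}$ of $n$-bit integers. Call a prime $p$ \emph{bad} for $S$ if $p \mid (x_i - x_j)$ for some $i\neq j$ with $x_i \neq x_j$, i.e.\ if $p$ collapses two elements of $S$ modulo $p$; a prime that is not bad is exactly one satisfying the conclusion. So it suffices to exhibit a threshold $N = (nk)^{O(1)}$ for which not every prime in $\{2,\dots,N\}$ is bad, since any prime $p \le N$ automatically has $\lceil \log_2(p+1)\rceil = O(\log N) = O(\log n + \log k)$ bits.

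First I would bound the number of bad primes. For each pair $i\neq j$ with $x_i\neq x_j$, the integer $d_{ij} = |x_i - x_j|$ satisfies $0 < d_{ij} < 2^n$. If $d_{ij}$ had $t$ distinct prime divisors, then, each being at least $2$, their product (which divides $d_{ij}$) would be at least $2^t$, forcing $t \le n$. Hence each $d_{ij}$ has at most $n$ distinct prime divisors, and since there are fewer than $\binom{k}{2} < k^2/2$ relevant pairs, the total number of bad primes is strictly less than $n k^2 / 2$.

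Next I would invoke a Chebyshev-type lower bound on the prime-counting function: there is an absolute constant $c_0 > 0$ such that $\pi(N) \ge c_0 N / \ln N$ for all $N \ge 2$. Choosing $N = C\, n k^2 \ln(nk)$ for a sufficiently large absolute constant $C$, and noting $\ln N = O(\log n + \log k)$, one obtains $\pi(N) \ge c_0 N / \ln N \ge n k^2$, the logarithmic factor in $N$ being absorbed into the constant $C$. Since there are at least $n k^2$ primes in $\{2,\dots,N\}$ but strictly fewer than $n k^2/2$ of them are bad, at least one prime $p \le N$ is not bad; for this $p$ we have $x_i \bmod p \neq x_j \bmod p$ whenever $x_i \neq x_j$, and $p \le N = (nk)^{O(1)}$ gives the bit-length bound $O(\log n + \log k)$.

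I do not anticipate a genuine obstacle here. The only point that needs care is keeping the threshold $N$ polynomial in $n$ and $k$ (so that the resulting prime has the claimed bit-length), which is precisely what the $N/\ln N$ growth of $\pi(N)$ delivers; a fully elementary estimate such as $\pi(N) \ge N/(2\log_2 N)$ for large $N$ would already be enough, so no deep analytic number theory is required, and the two quantitative facts being combined — "each difference has few prime factors" and "there are many small primes" — are both routine.
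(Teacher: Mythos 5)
Your proof is correct, and it is the standard counting argument for the FKS prime-hashing lemma: bound the bad primes by $n\binom{k}{2}<nk^{2}/2$ using the fact that each nonzero difference $|x_i-x_j|<2^{n}$ has at most $n$ distinct prime divisors, then apply a Chebyshev-type lower bound on $\pi(N)$ to find a threshold $N=\mathrm{poly}(n,k)$ below which some prime is good. Note, however, that the paper does not prove this lemma at all --- it states it as a citation to Fredman, Koml\'{o}s and Szemer\'{e}di \cite{FKS84} --- so there is no internal proof to compare against; your argument is the one found in the cited source.
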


\section{$k$-lossy Catalytic Computation}
\label{lossy:main}
In this section, we will prove our main result, that is, $k$-$\LCL = \CL$.

\begin{thm}
$k$-$\LCL = \CL$.
\end{thm}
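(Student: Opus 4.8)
The plan is to prove the two inclusions separately. The containment $\CL \subseteq k$-$\LCL$ is immediate, since an ordinary catalytic Turing machine is in particular a $k$-lossy one (it loses zero bits $\le k$), so all the work is in showing $k$-$\LCL \subseteq \CL$. Fix a $k$-lossy catalytic Turing machine $\mathcal{M}$ with work space $s(n) = O(\log n)$ and auxiliary space $m := 2^{s(n)} = \mathrm{poly}(n)$; the goal is to build an \emph{ordinary} catalytic Turing machine $\mathcal{M}'$ with $O(\log n)$ work space and $m$ auxiliary space that decides $L(\mathcal{M})$. The only obstruction to simply having $\mathcal{M}'$ simulate $\mathcal{M}$ is that after the simulation the auxiliary tape holds some $w'$ with $\hamdist(w,w') \le k$ rather than the original $w$, and $w$ has already been overwritten. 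The idea is to have $\mathcal{M}'$ precompute, before it ever writes on the auxiliary tape, a short FKS fingerprint of $w$ that uniquely pins $w$ down among its $\mathrm{poly}(n)$ possible pre-images, and then use it at the end to repair the $\le k$ corrupted bits.

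Concretely, view the initial auxiliary content $w$ as an $m$-bit integer, and for $E \subseteq \{1,\dots,m\}$ let $e_E \in \{0,1\}^m$ be its indicator vector, so that (with bitwise XOR $\oplus$) the strings at Hamming distance $\le t$ from $w$ are exactly the $w \oplus e_E$ with $|E| \le t$. First, $\mathcal{M}'$ computes, reading but never writing the auxiliary tape, a prime $p$ witnessing Lemma~\ref{fks} for the set
\[
S \;=\; \{\, w \oplus e_E \;:\; E \subseteq \{1,\dots,m\},\ |E| \le 2k \,\},
\]
i.e.\ a prime with $x \not\equiv y \pmod p$ for all distinct $x,y \in S$. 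Since $|S| \le (2k+1)m^{2k} = \mathrm{poly}(n)$ and the elements of $S$ are $m = \mathrm{poly}(n)$ bits long, Lemma~\ref{fks} guarantees such a $p$ with $O(\log n)$ bits, and $\mathcal{M}'$ finds it by exhaustive search: iterate over candidate $p$ up to the guaranteed bound (testing primality by trial division, since $p = \mathrm{poly}(n)$), and for each $p$ check over all $\mathrm{poly}(n)$ pairs $E_1 \neq E_2$ that $(w \oplus e_{E_1}) \bmod p \neq (w \oplus e_{E_2}) \bmod p$ by streaming the auxiliary tape while maintaining a running residue. All of this fits in $O(\log n)$ work space. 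Then $\mathcal{M}'$ records $p$ and $r := w \bmod p$ on the work tape.

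Now $\mathcal{M}'$ simulates $\mathcal{M}(x,w)$, storing its accept/reject decision in one work-tape cell; afterwards the auxiliary tape holds some $w'$ with $\hamdist(w,w') \le k$. Every candidate for the original content is of the form $w' \oplus e_{D'}$ with $|D'| \le k$; writing $w' = w \oplus e_D$ for the unknown damage set $D$, $|D| \le k$, each candidate equals $w \oplus e_{D \triangle D'}$ (symmetric difference), and $|D \triangle D'| \le 2k$, so every candidate lies in $S$. Hence all candidates are pairwise distinct mod $p$, and exactly one of them --- namely $w$ itself, occurring at $D' = D$ --- has residue $r$. So $\mathcal{M}'$ enumerates all $D' \subseteq \{1,\dots,m\}$ with $|D'| \le k$ (a $\mathrm{poly}(n)$-size search, each $D'$ described in $O(\log n)$ bits), for each recomputes $(w' \oplus e_{D'}) \bmod p$ by streaming the auxiliary tape, and upon the unique match flips those $\le k$ bits of the auxiliary tape, restoring it exactly to $w$. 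Finally it outputs the recorded decision. By the consistency of $\mathcal{M}$ this decision is independent of $w$, so $\mathcal{M}'$ decides $L(\mathcal{M})$; and $\mathcal{M}'$ restores its auxiliary tape exactly, uses $O(\log n)$ work space and only the $m$ auxiliary cells that $\mathcal{M}$ uses, hence $L(\mathcal{M}) \in \CL$.

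The conceptual crux --- and the only genuinely nontrivial step --- is realizing that a precomputed $O(\log n)$-bit fingerprint of $w$ suffices to undo the corruption even though $w$ is overwritten, together with the observation that the FKS set must be the $2k$-neighborhood of $w$ (not the $k$-neighborhood): for every possible damage pattern $D$, the whole candidate set $\{w' \oplus e_{D'} : |D'| \le k\}$ must sit inside $S$ so that $(p,r)$ recovers $w$ unambiguously. The remainder is routine but must still be checked carefully: that searching for $p$, testing primality of an $O(\log n)$-bit number, enumerating the bounded-size index sets, and the streaming residue computations all run in $O(\log n)$ work space, and that the auxiliary tape is touched only by reading before the simulation and by flipping exactly the $\le k$ repair bits after it.
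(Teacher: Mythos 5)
Your proposal is correct and follows essentially the same route as the paper: precompute an FKS prime $p$ that injectively hashes the $2k$-Hamming-ball around the original auxiliary content $w$, record $w \bmod p$ on the work tape, simulate $\mathcal{M}$, and then recover $w$ by enumerating all strings within Hamming distance $k$ of the post-simulation content and selecting the unique one with the stored residue. The paper's Algorithm~\ref{loss:mainalgo} implements exactly this, including the key observation that the hashed family must be the $2k$-ball (not the $k$-ball) around $w$ so that the $k$-ball around the corrupted string is guaranteed to lie inside it.
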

\begin{proof}
First note that proving $k$-$\LCL \subseteq \CL$ is sufficient because $\CL \subseteq k$-$\LCL$ follows trivially from the definitions. Let $\mathcal{M}$ be a $k$-lossy catalytic Turing machine with $c\log n$ size work space and $n^c$ size auxiliary space. We will prove $k$-$\LCL \subseteq \CL$ by showing that there exists a deterministic catalytic Turing machine $\mathcal{M'}$ with $c'\log n$ size work space and $n^{c'}$ size auxiliary space, where $c'$ is a sufficiently larger constant than $c$, such that on every input $x$ and initial auxiliary content $w'$, $\mathcal{M}$ accepts $x$ if and only if $\mathcal{M'}$ accepts $x$.

We present the algorithm of $\mathcal{M'}$ in Algorithm \ref{loss:mainalgo}.

\begin{algorithm}[H]
\caption{Algorithm of $\mathcal{M'}$}
\label{loss:mainalgo}
$P$ is the set of $O(\log n)$-bit primes. $I_{2k}$ is the set of tuples of $2k$ indices and $I_{k}$ is the set of tuples of $k$ indices, where each index ranges from 0 to $n^c$.
\begin{algorithmic}[1]
\Procedure{\textsc{losslessSimulation}}{\textrm{Input $x$, Auxiliary Content $w'$}}
\State $good\_prime \gets 0$ \label{hashbegin}
\For{$p \in P$} \label{loss:allprimes}
	\State $bad\_prime\_found \gets$ FALSE
	\For{$\langle ind_1, ind_2 \rangle \in I_{2k} \times I_{2k}$} \label{loss:pairs}
		\If {$w_{ind_1} \neq w_{ind_2}$ and $w_{ind_1}$ mod $p = w_{ind_2}$ mod $p$} \label{loss:checkprime}
		\State $bad\_prime\_found \leftarrow $ TRUE
		\State Jump to line \ref{loss:yy}
		\EndIf
	\EndFor
	
	\If {$bad\_prime\_found =$ FALSE} \label{loss:yy}
		\State $good\_prime \gets p$ \label{loss:goodprime}
		\State Jump to line \ref{loss:storehashvalue}
	\EndIf	
\EndFor \label{hashend}
\\

\State $init\_aux\_val \gets w$ mod $good\_prime$ \label{loss:storehashvalue}
\State Simulate $\mathcal{M}$ on $(x,w)$
\If {$\mathcal{M}$ accepts $x$ in the simulation}
    \State $result \gets$ TRUE 
\Else
    \State $result \gets$ FALSE 
\EndIf
\\
\label{loss:restore} \State Let $z$ be the first $n^c$ bits of auxiliary tape of $\mathcal{M'}$ after the simulation
\For{$\langle ind \rangle \in I_{k}$} \label{loss:restore}
	\If {$z_{ind}$ mod $ good\_prime$ = $init\_aux\_val$} \label{loss:checkrestore}
		\State replace $z$ by $z_{ind}$
		\State Jump to line \ref{loss:halting}
		\EndIf
\EndFor \label{loss:restoreend}
\\
\If{$result = $ TRUE} \label{loss:halting}
	\State \textbf{Accept}
\Else
	\State \textbf{Reject}
\EndIf
\EndProcedure
\end{algorithmic}
\end{algorithm}

\subsubsection*{Description of Algorithm \ref{loss:mainalgo}}
$\mathcal{M'}$ on input $x$ and auxiliary content $w'$, from line \ref{hashbegin} to \ref{hashend},
first finds a prime that hashes all the auxiliary contents in $W$ to $O(\log n)$-bit numbers injectively, where $n = |x|$ and $W$ is the set of $n^c$ size binary strings with hamming distance at most $2k$ from the first $n^c$ bits of $w'$. It is easy to see that $|W| = O(n^{2kc})$. We say a prime number is a \emph{good prime} if it injectively hashes the members of $W$ to $O(\log n)$-bit numbers. Let $P$ denote the set of all $d \log n$-bit prime numbers, where $d$ is a sufficiently large constant, such that at least one of the primes in $P$ is a good prime. The existence of such a $P$ follows from FKS hashing stated in Lemma \ref{fks}. 

Let $I_{2k} = \{(i_1, i_2, \dots, i_{2k})\}$ denote the set of tuples of $2k$ indices, where an index, say $i_j$, ranges from 0 to $n^c$. $\mathcal{M'}$ goes over all the primes in $P$, and for every prime $p$, it checks whether it is a good prime by comparing hashed values of all possible pairs of auxiliary contents of $W$ in line  \ref{loss:checkprime}. To compare the hashed values of all such pairs, $\mathcal{M'}$ iterates over pairs of tuples of indices from $I_{2k}$. For an element $ind \in I_{2k}$, $\mathcal{M'}$ generates an element of $W$ by simply flipping the $i$th bit of $w$ (initial $n^c$ bits of $w'$), if $i \in ind$, for $i \in [1,n^c]$. In line \ref{loss:checkprime}, we denote $w$ with flipped bits according to the tuples of indices, say $ind_1$ and $ind_2$, by $w_{ind_1}$ and $w_{ind_2}$. If all auxiliary contents in $W$ are mapped distinctly, then the loop of the line \ref{loss:pairs} terminates successfully without setting the variable $bad\_prime\_found$ to TRUE. $\mathcal{M'}$ then stores the current prime $p$ in the variable $good\_prime$ in line \ref{loss:goodprime} and terminates the loop of line \ref{loss:allprimes}. Since all the primes and tuples are of $O(\log n)$ size, these iterations can be done in $O(\log n)$ space.

In line \ref{loss:storehashvalue}, $\mathcal{M'}$ stores the hashed value of $w$ in $init\_aux\_val$. After which, it simulates $\mathcal{M}$ on $(x,w)$ and stores the result of acceptance in variable $result$. From line 25 to \ref{loss:restoreend}, $\mathcal{M'}$ restores the initial auxiliary content completely. Let $z$ be the initial $n^c$ bits of the auxiliary tape of $\mathcal{M'}$ after completing the simulation of $\mathcal{M}$ on $(x,w)$. It is possible that $z$ is not equal to $w$ as $\mathcal{M}$ is a lossy catalytic machine. Let $Z$ be the set of all $n^c$ size binary strings whose hamming distance is at most $k$ from $z$. Clearly, $w \in Z$ and $Z \subseteq W$. Also, $good\_prime$ injectively maps the elements of $Z$ to $O(\log n)$ bit primes. To restore $w$, $\mathcal{M'}$ goes over all the binary strings in $Z$ with the help of $I_k$ in the loop of line 26. When it finds the binary string whose hashed value matches $init\_aux\_cal$, it replaces $z$ with that binary string and jumps to line \ref{loss:halting}.

Finally, in line \ref{loss:halting}, $\mathcal{M'}$ checks the value of $result$ and accepts or rejects accordingly.
\end{proof}

\singlespacing

\bibliography{main}

\end{document}